\newcommand{\R}{\mathbb R}
\newcommand{\I}{\mathbb I}
\newcommand{\uby}{\eth\Omega}
\numberwithin{equation}{section}
\title[Gravitational collapse of homogeneous perfect fluid in HOG theories]{%
Gravitational collapse of homogeneous perfect fluids in higher-order gravity theories}
\author[R.\ Giamb\`o]{Roberto Giamb\`o}
\address{Dipartimento di Matematica e Informatica \hfill\break\indent
Universit\`a di Camerino\hfill\break\indent Italy}
\email{roberto.giambo@unicam.it}
\thanks{This paper is published despite the effects of the Italian law 133/08 (more on \href{http://groups.google.it/group/scienceaction}{http://groups.google.it/group/scienceaction}). This
law drastically reduces public funds to public Italian universities, which is particularly dangerous for free scientific research, and it will
prevent young researchers from getting a position, either temporary or tenured, in Italy. The author is protesting against this law to
obtain its cancelation.}
\date\today
\begin{document}


\theoremstyle{plain}\newtheorem*{teon}{Theorem}
\theoremstyle{definition}\newtheorem*{defin*}{Definition}
\theoremstyle{plain}\newtheorem{teo}{Theorem}[section]
\theoremstyle{plain}\newtheorem{theorem}{Theorem}[section]
\theoremstyle{plain}\newtheorem{prop}[teo]{Proposition}
\theoremstyle{plain}\newtheorem{lemma}[teo]{Lemma}
\theoremstyle{plain}\newtheorem*{lem-n}{Lemma}
\theoremstyle{plain}\newtheorem{cor}[teo]{Corollary}
\theoremstyle{definition}\newtheorem{definition}[teo]{Definition}
\theoremstyle{remark}\newtheorem{remark}[teo]{Remark}
\theoremstyle{plain} \newtheorem{assumption}[teo]{Assumption}
\theoremstyle{definition}\newtheorem{example}{Example} \swapnumbers
\theoremstyle{plain} \newtheorem*{acknowledgement}{Acknowledgements}
\theoremstyle{definition}\newtheorem*{notation}{Notation}


\begin{abstract}
This paper investigates the evolution of collapsing FRW models with a scalar field
having the potential which arises in the conformal frame of high order gravity theories, coupled to matter described by a perfect fluid with energy
density $\rho$ and pressure $p$, obeying a barotropic equation of state. The solutions are shown to evolve generically to a singular state in a finite time and they are used as sources for
radiating objects undergoing complete gravitational collapse. Although these singularities may be naked in some special case, it is shown that generically a black hole forms.
\end{abstract}

\maketitle


\section{Introduction}\label{sec:intro}

The study of Friedmann--Robertson--Walker (FRW) solutions has always been a central topic in relativistic cosmology. In this sense, a great deal of attention is paid to investigation of global structures and, waiting for a satisfying theory of quantum gravity, spacetime boundaries may be described in classical terms only, studying existence of singularities when gravity is coupled to different kind of matter tensors. Especially dealing with early universe, a huge literature exists on scalar field cosmologies, both in the free case \cite{ch}, and in the more complicated situation when a potential is added (see for instance \cite{cole,clw,guetal,ruba}) and initial (i.e. big--bang) singularity existence is investigated.

Scalar field solutions are of great importance also in relativistic astrophysics -- massless solution is a useful model in gravitational collapse studies because the evolution equation in absence of gravity is free of singular solution. On the other side, adding a potential to the model may result again in singularity formation, which here represents the endstate of collapse \cite{joshi}. In this context,
the gravitational collapse  of self--interacting homogeneous scalar field models has been recently analyzed in \cite{collapse}, where it is proven that for a wide class of potential the evolution is generically (i.e. up to a zero--measured initial data set) divergent in a finite comoving time.

In this paper flat FRW spacetime is considered, where
scalar field is nonminimally coupled to
ordinary matter described by a barotropic fluid.  Motivations come from higher order gravity theories (HOG) derived from
Lagrangians of the form
\begin{equation}
L=f\left(  R\right)  \sqrt{-g}+2L_{\mathrm{m}}\left(  \Psi\right)
,\label{hogl}%
\end{equation}
where $f$ is an arbitrary smooth function and $L_{\mathrm{m}}\left(
\Psi\right)  $ is the matter Lagrangian depending on the matter fields $\Psi$.
Nonlinear theories of gravitation are a well-established field of research since the pioneering works \cite{bc,mff,witt}; see also \cite{cafa} for a thorough analysis of interpretation issues.
It is well known that under the conformal transformation%
\begin{equation}
\widetilde{g}_{\mu\nu}=f^{\prime}\left(  R\right)  g_{\mu\nu},\label{conf}%
\end{equation}
the field equations reduce to the Einstein field equations with a scalar field
as an additional matter source, namely
\begin{equation}
\widetilde{G}_{\mu\nu}=T_{\mu\nu}\left(  \widetilde{g},\phi\right)
+\widetilde{T}_{\mu\nu}\left(  \widetilde{g},\Psi\right)  ,\label{confm}%
\end{equation}
where%
\[
T_{\mu\nu}\left(  \widetilde{g},\phi\right)  =\partial_{\mu}\phi\partial_{\nu
}\phi-\frac{1}{2}\widetilde{g}_{\mu\nu}\left[  \left(  \partial\phi\right)
^{2}-2V\left(  \phi\right)  \right]  ,
\]
and
\begin{equation}
\phi=\sqrt{\frac{3}{2}}\ln f^{\prime}\left(  R\right), \label{scfi}%
\end{equation}
allowing to write the potential of the scalar field as
\[
V\left(  R\left(  \phi\right)  \right)  =\frac{1}{2\left(  f^{\prime}\right)
^{2}}\left(  Rf^{\prime}-f\right).
\]

From now on let us work exclusively in the conformal frame and for simplicity drop
the tilde from all quantities. Ordinary matter is described
by a perfect fluid with equation of state $p=(\gamma-1)\rho$ (with $0\leq
\gamma\leq2$) and field equations reduce to the following system:
\begin{subequations}
\begin{align}
&H^{2}+\frac{k}{a^{2}}=\frac{1}{3}\left(  \rho+\frac{1}{2}\dot{\phi}^{2}+V\left(  \phi\right)  \right)  , \label{fri1jm}\\
&\dot{H}=-\frac{1}{2}\dot{\phi}^{2}-\frac{\gamma}{2}\rho+\frac{k}{a^{2}},\label{fri2jm}\\
&\ddot{\phi}+3H\dot{\phi}+V^{\prime}(\phi)=\frac{4-3\gamma}{\sqrt{6}}\rho,\label{emsjm}\\
&\dot{\rho}+3\gamma\rho H=-\frac{4-3\gamma}{\sqrt{6}}\rho\dot{\phi}
\label{conssfjm}
\end{align}
\end{subequations}
(see for example \cite{bbpst,maso}). Here $H=\dot{a}/a$, where $a\left(  t\right)  $ is the scale factor of FRW and
overdot denotes differentiation with respect to time $t$. It is worth stressing here once for all that
this approach is different from the one used in previous investigations \cite{comi,jm} where
the vacuum Lagrangian is conformally transformed into the Einstein frame and the matter Lagrangian is added (see \cite{fara} for a comparative discussion about the different approaches).

Of course, for both cosmological and astrophysical purposes, it is important to study the solutions of \eqref{fri1jm}--\eqref{conssfjm} relaxing as much as possible hypotheses on the potential, although exponential potentials, for instance, have been extensively treated in literature. In the present paper, the class of potentials studied is described in technical terms by Definition \ref{def:V0}. Polynomial potentials with even leading terms, exponential potentials -- with small logarithmic derivative value -- and potentials arising in conformally related theories of gravity, such as \cite{maed}
\begin{equation}
V\left(  \phi\right)  =V_0\left(  1-e^{-\sqrt{2/3}\phi}\right)
^{2} \label{rsquared},
\end{equation}
belong to the class considered. In Section \ref{sec:collapse}, casting the problem into a dynamical systems' framework, it is proven that the qualitative behavior of
the solution is similar to the case without matter, i.e. solutions completely collapse in  finite time up to a zero--measured initial data set (Theorem \ref{thm:main}). Responsible for the formation of this singularity is the scalar field, which dominates over the energy density of the fluid even when the latter diverges.

In Section \ref{sec:star} the solutions are used as source to build collapsing object models where the exterior is given by a generalized Vaidya solution. The endstate of these models is studied for such objects, finding that the collapse generically ends into a black hole. We stress the fact that naked singularities may arise from these models, choosing initial data in the non generical set excluded by Theorem \ref{thm:main}. In this sense, the result is qualitatively similar to what shown in \cite{collapse} -- for the class of potentials considered, weak cosmic censorship hypothesis is not affected when a barotropic perfect fluid is added to the model.

\section{Collapsing solutions in the flat case}\label{sec:collapse}

In the following, we are going to study the future late time behavior of solutions of
\eqref{fri1jm}--\eqref{conssfjm}, with $k=0$, such that they are collapsing at initial
time of observation, i.e. $H(0)<0$. Since $\gamma$ is non negative,
\eqref{fri2jm} implies $H(t)<0$, $\forall t>0$. Regardless of the physical derivation of
the potential $V(\phi)$, discussed in the previous section, the results we are going to state will
hold for the following general class of potential functions $V(\phi)$.

\begin{definition}
\label{def:V} We say that $V$ belongs to the set $\mathfrak{C}$ if the
following conditions are satisfied:

\begin{itemize}
\item[(A1)] (Structure of the set of critical points) The critical points of
$V$ are isolated.

\item[(A2)] (Existence of a suitable bounded "sub-level" set) There exists
$a< b\in\R$ such that
\[
\phi\ge b\Longrightarrow V^{\prime}(\phi)>0,\, \phi\le a \Longrightarrow
V^{\prime}(\phi)<0.
\]
Moreover, setting $V_{*} = \max_{[a,b]}V$, it holds
\[
\lim_{\phi\rightarrow-\infty}V(\phi) > V_{*},\,\lim_{\phi\rightarrow+\infty
}V(\phi) > V_{*}.
\]

\item[(A3)] (Growth condition) The function defined in terms of $V$ and its
first derivative $V^{\prime}$ as
\begin{equation}
\label{eq:Y}u(\phi):=\frac{V^{\prime}(\phi)}{\sqrt6V(\phi)},
\end{equation}
satisfies
\begin{align}
&  \exists \lim_{\phi\to\pm\infty} |u(\phi)|< 1,\label{eq:u-infty}\\
&  \exists\lim_{\phi\to\pm\infty} {u^{\prime}(\phi)} \, (=0).
\label{eq:uprime-infty}%
\end{align}

\end{itemize}
\end{definition}

The proof will be carried out translating the original problem into a compact
framework, in order to apply classical results of dynamical systems. This
approach, already used in \cite{fost} -- where gravity was minimally coupled to
a scalar field under the action of a potential only -- seems more promising in
this framework, where a perfect fluid is added, than the one used in
\cite{collapse}, where everything was reduced to the study of a single second
order ODE. The present approach, however, makes necessary a further assumption on $V(\phi)$ allowing the compactification of the problem, that will be performed once Theorem \ref{thm:unbound} below is proved.

\begin{assumption}
\label{as:phi} We assume that for some constant $M>b$, there exist a
$\mathcal{C}^{2}$ map $f(\phi):[M,+\infty\lbrack\rightarrow]0,s_0]$ such that
$f$ is monotone decreasing (and then $\lim_{\phi\rightarrow+\infty}f(\phi
)=0$, $f(M)=s_0$), there exists $\lim_{\phi\rightarrow+\infty}f^{\prime}(\phi)$ ($=0$)
and moreover
\begin{align}
&  \lim_{\phi\rightarrow+\infty}\frac{u^{\prime}(\phi)}{f^{\prime}(\phi)}=0,\\
&  \exists\lim_{\phi\rightarrow+\infty}\frac{f^{\prime\prime}(\phi)}%
{f^{\prime}(\phi)}\in\mathbb{R}.
\end{align}

\end{assumption}

\begin{definition}\label{def:V0}
We say that a potential $V\in\mathfrak C$ belongs to the class $\mathfrak C_0$ if assumption \ref{as:phi} holds for $V(\phi)$ together with a similar assumption for the behavior at negative infinity,
i.e. supposing also the existence of a function (that we will call $f$ again)
$f:]-\infty,N]\to[-s_0,0[$, where $N<a$, with similar properties as in
Assumption \ref{as:phi} opportunely adapted to this case.
\end{definition}

\begin{remark}\label{rem:counterex}
Although the class $\mathfrak C_0$ contains many relevant examples in cosmology such as polynomials with even leading term, exponential functions $e^{\lambda\phi}$ (with $\lambda<\sqrt 6$, to satisfy growth condition (A3)), or even the potential \eqref{rsquared} arising in the conformal frame theory discussed in the introduction, one can exhibit examples showing that $\mathfrak C\setminus\mathfrak C_0$ is non empty: take for instance $V(\phi)=e^{\phi\mathrm{Si}(\phi)+\cos\phi}$ as $\phi\ge b$, which\footnote{The function $\mathrm{Si}(\phi)$ is the sine integral function, defined as $\int_0^\phi\sin\vartheta/\vartheta\,\mathrm d\vartheta$.}  satisfies the growth assumption (A3) but does not admit a function $f(\phi)$ as in Assumption \ref{as:phi}.

Although the main Theorem \ref{thm:main} stated in Subsection \ref{sec:main} will require $V\in\mathfrak C_0$, for some of the preliminary results (for instance, Theorem \ref{thm:unbound}) it will suffice $V\in\mathfrak C$.
\end{remark}

Let us start the study of the problem introducing the unknown functions
\begin{equation}
x=\frac{1}{H},\qquad y=\frac{\dot{\phi}}{H},\qquad z=\frac{\sqrt{\rho}}%
{H},\label{eq:change}%
\end{equation}
and the time variable change $\mathrm{d}\tau=-H\mathrm{d}t$. The system
\eqref{fri1jm}--\eqref{conssfjm} for $k=0$ becomes
\begin{subequations}
\begin{align}
&  \frac{\mathrm{d}\phi}{\mathrm{d}\tau}=-y,\label{eq:dphi}\\
&  \frac{\mathrm{d}x}{\mathrm{d}\tau}=-\frac{1}{2}x(y^{2}+\gamma
z^{2}),\label{eq:dx}\\
&  \frac{\mathrm{d}y}{\mathrm{d}\tau}=V'(\phi) x^2+3y-\frac{1}{2}y^{3}%
-z^{2}(\alpha+\frac{\gamma}{2}y),\label{eq:dy}\\
&  \frac{\mathrm{d}z}{\mathrm{d}\tau}=-\frac{1}{2}z\left[  y^{2}-\alpha
y+\gamma(z^{2}-3)\right]  ,\label{eq:dz}%
\end{align}
where we denote $\alpha:=\tfrac{4-3\gamma}{\sqrt{6}}$. Equation \eqref{fri1jm}
gives the constraint
\end{subequations}
\begin{equation}
V(\phi)x^{2}+\frac{1}{2}y^{2}+z^{2}=3,\label{eq:constr}%
\end{equation}
which is invariant by the flow of the above system. We also observe, from
\eqref{eq:dx}, that the set $\{x=0\}$ is invariant by the flow and so
$\mathrm{sign}(x)$ is also invariant. This fact guarantees that for collapsing
solutions corresponding to curves with $x<0$, $x(\tau)\,$ will remain
non positive for all $\tau>0$. Let
$\mathbb{I}\subseteq\lbrack0,+\infty)$ be the maximal right interval of
definition. The following crucial result holds.

\begin{theorem}
\label{thm:unbound} Let $V\in\mathfrak C$. For almost every solution $\gamma=(\phi,x,y,z)$ of
\eqref{eq:dphi}--\eqref{eq:dz}, such that \eqref{eq:constr} and $x<0$ are
satisfied at $\tau=0$ (and then $\forall\tau\in\mathbb{I}$), it holds
$$\limsup_{t\rightarrow\sup\mathbb{I}}|\phi(t)|=+\infty.$$
\end{theorem}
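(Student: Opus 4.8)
The plan is to argue by contraposition: assuming $\phi$ stays bounded on $\mathbb{I}$, I will show that the corresponding initial datum must lie in a null subset of the constraint surface \eqref{eq:constr}. First I would compactify the orbit. Since $x<0$, \eqref{eq:dx} gives $\tfrac{\mathrm d x}{\mathrm d\tau}=-\tfrac12 x(y^2+\gamma z^2)\ge 0$, so $x$ is nondecreasing, confined to $[x(0),0)$, hence bounded a priori, and $x(\tau)\to x_\infty\in[x(0),0]$. If moreover $\phi$ were bounded, then $V(\phi)$ is bounded, and \eqref{eq:constr} rewritten as $\tfrac12 y^2+z^2=3-V(\phi)x^2$ bounds $y$ and $z$ as well. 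Thus the orbit would lie in a compact subset of the constraint surface; in particular $\sup\mathbb{I}=+\infty$, the orbit is precompact, and its $\omega$-limit set $\Omega$ is nonempty, compact, connected, invariant, and contained in $\{x=x_\infty\}$.

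I would then split on the sign of $x_\infty$. If $x_\infty=0$, then $\Omega\subseteq\{x=0\}$, where the term $V'(\phi)x^2$ drops out of \eqref{eq:dy} and the $(y,z)$-motion becomes autonomous on the ellipse $\tfrac12 y^2+z^2=3$. A compact invariant set of this one-dimensional flow must contain an equilibrium, and a direct check shows that for $\alpha\ne 0$ every equilibrium of the reduced flow has $y\ne 0$; but the full orbit through such a point satisfies $\phi(\tau)=\phi(0)-y\tau$ and is therefore unbounded in $\phi$, contradicting the compactness of $\Omega$. Hence under the standing assumption $x_\infty<0$.

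In that case, integrating $\tfrac{\mathrm d}{\mathrm d\tau}\ln|x|=-\tfrac12(y^2+\gamma z^2)$ over $[0,\infty)$ yields $\int_0^\infty(y^2+\gamma z^2)\,\mathrm d\tau<\infty$; since the integrand is nonnegative with bounded derivative along the precompact orbit, Barbalat's lemma gives $y\to 0$ and (for $\gamma>0$) $z\to 0$. Thus $\Omega\subseteq\{y=0,z=0,x=x_\infty\}$, and invariance together with \eqref{eq:dy} forces $V'(\phi)=0$ on $\Omega$. By (A1) these points are isolated, so the connected set $\Omega$ reduces to a single equilibrium $P=(\phi^*,x_\infty,0,0)$ with $V'(\phi^*)=0$.

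It remains to show that each such $P$ attracts only a null set. Linearizing \eqref{eq:dphi}--\eqref{eq:dz} at $P$, and noting that the gradient of the left-hand side of \eqref{eq:constr} points in the $x$-direction there (so the tangent to the surface is spanned by $\partial_\phi,\partial_y,\partial_z$), the Jacobian restricted to the surface consists of the block $\bigl(\begin{smallmatrix}0&-1\\ V''(\phi^*)x_\infty^2&3\end{smallmatrix}\bigr)$ acting on $(\phi,y)$, with trace $3>0$, together with the eigenvalue $\tfrac{3\gamma}{2}>0$ in the $z$-direction. Consequently at most one eigenvalue has negative real part, so the stable manifold of $P$ is at most one-dimensional inside the three-dimensional constraint surface, hence a null set; as the critical points of $V$ are isolated there are only countably many such $P$, and the set of initial data producing bounded $\phi$ is contained in the countable union of their stable manifolds, which is null. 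This is exactly the claim. The main obstacle I expect is the treatment of the degenerate parameters for which $P$ fails to be hyperbolic on the surface: $\gamma=0$ (where the $z$-eigenvalue collapses to $0$ and must be replaced by a center-manifold computation), $\alpha=0$, i.e. $\gamma=\tfrac43$ (where $\{x=0\}$ acquires a whole line of genuine equilibria with $y=0$, so the second paragraph must be redone), and degenerate critical points with $V''(\phi^*)=0$; each of these requires a separate analysis in order to preserve the measure-zero conclusion.
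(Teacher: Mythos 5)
Your strategy is essentially the paper's: compactify the orbit via the constraint, pass to the $\omega$-limit set, split on $x_\infty=0$ versus $x_\infty<0$, use the autonomous reduced flow on the ellipse $\tfrac12 y^2+z^2=3$ in the first case, and in the second case locate the limit at a critical point of $V$ and kill it generically by linearization (your Barbalat step is a clean substitute for the paper's observation that $x$ is constant on $\omega(\gamma)$, and your in-surface eigenvalues $\tfrac12\bigl(3\pm\sqrt{9-12V''/V}\bigr)$ and $\tfrac{3\gamma}{2}$ are exactly the paper's $t$-time eigenvalues $\tfrac12\bigl(\sqrt{3V}\pm\sqrt{3V-4V''}\bigr)$, $\gamma\sqrt{3V}$ after rescaling by $-H=\sqrt{V/3}$ and halving for $z$ versus $\rho=z^2H^2$). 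The genuine gap is the case $\alpha=0$, i.e.\ the barotropic index $\gamma=4/3$, which sits squarely inside the theorem's hypotheses ($0\le\gamma\le2$; it is radiation) and which you only announce ``must be redone.'' There your $x_\infty=0$ argument collapses: the reduced flow on the ellipse then has the equilibrium $(y,z)=(0,-\sqrt3)$ (the value $+\sqrt3$ is excluded since $z=\sqrt\rho/H$ and $H<0$), so no contradiction with boundedness of $\phi$ arises and a solution could a priori converge to it. The paper closes precisely this case by eliminating $z$ with the constraint and linearizing the unconstrained system \eqref{eq:dphi}--\eqref{eq:dy} at $(\phi_0,0,0)$, finding eigenvalues $0,-2,1$, hence a positive-dimensional unstable manifold and again a null attracted set; without this (or an equivalent) computation your proof does not prove the theorem at $\gamma=4/3$.

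Two of your other flagged ``obstacles'' are in fact non-issues given your own computation: since the $(\phi,y)$ block has trace $3$, it always possesses an eigenvalue with real part at least $3/2$, irrespective of $V''(\phi^*)$ and of the value of the $z$-eigenvalue, so every orbit converging to $P$ lies in the local center-stable manifold, of dimension at most $2$ in the $3$-dimensional constraint surface, hence null; no center-manifold dynamics needs to be computed (this uniform ``one unstable eigenvalue suffices'' argument is also all the paper invokes). Conversely, you misdiagnose $\gamma=0$: the failure there is not the vanishing $z$-eigenvalue at $P$ but occurs earlier, since Barbalat then yields only $y\to0$ and not $z\to0$, and invariance of the limit set with $y\equiv0$ allows equilibria $(\phi^*,x_\infty,0,z_*)$ with $z_*\ne0$ and $V'(\phi^*)x_\infty^2=\alpha z_*^2>0$ --- a family \emph{not} located at critical points of $V$ (indeed, for $\gamma=0$ and $\rho(0)>0$ one has $\rho\approx\mathrm{const}$ when $\phi$ is bounded, so $z^2\to\rho^*x_\infty^2\ne0$), which would require its own linearization to be excluded generically. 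To be fair, the paper's dichotomy ``either $x=0$ or $y=z=0$'' has the same blind spot at $\gamma=0$, but since you present this as a case you know how to fix by a center-manifold computation at $P$, the fix as described would not address the actual difficulty.
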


\begin{proof}
By contradiction, let $\phi(t)$ be bounded. Since $x$ is negative and monotone, it is bounded on $\I$ and so, using \eqref{eq:constr} and the fact that $V(\phi)$ is bounded from below (possibly by a negative value), the curve $\gamma$ is bounded, which means that $\sup\I=+\infty$. Let $\omega(\gamma)$ be the omega--limit set of $\gamma$ which \cite[Section 3.2]{perko} is a nonempty, connected and compact set invariant by the flow of \eqref{eq:dphi}--\eqref{eq:dz}. By continuity also \eqref{eq:constr} is satisfied on $\omega(\gamma)$.

Let $p\in\omega(\gamma)$ and $\gamma_p$ be the solution through $p$. Again, since $x$ is monotone increasing on $\gamma$, then $x$ evaluated on the omega--limit set $\omega(\gamma)$ must be constant, equal to the limit value of $x$ on $\gamma$. This implies that $x$ is constant on $\gamma_p$,
which gives
\begin{enumerate}
\item either  $x=0$
\item\label{itm:2} or $y=z=0$.
\end{enumerate}
Let us consider separately the two situations.
In the first case, using \eqref{eq:constr} in \eqref{eq:dy} we get
\begin{equation}\label{eq:5}
\frac{\mathrm dy}{\mathrm d\tau}=\left(3-\frac12 y^2\right)\left(\left(1-\frac\gamma 2\right)y-\alpha\right),
\end{equation}
and so $y(\tau)$ (which is bounded) tends to an equilibrium point of \eqref{eq:5}. Equilibria are given by $y=\pm\sqrt 6$ and $y=\tfrac{2\alpha}{\gamma-2}$; therefore, the only admissible case happens when $\tfrac{2\alpha}{\gamma-2}=0$, that is $\gamma=4/3$, since otherwise $y$ would tend to a nonzero value and, using \eqref{eq:dphi}, $\phi$ would be unbounded.
In the admissible case $\gamma=4/3$ it is easy to see that $\gamma_p$ reduces to the point $x=y=0$, $z=-\sqrt 3$ (we exclude the positive value by continuity). To prove that solutions of the system do not approach this equilibrium point generically, we use \eqref{eq:constr} in \eqref{eq:dx}--\eqref{eq:dy} and study the unconstrained system given by \eqref{eq:dphi}--\eqref{eq:dy}. It can be easily seen that eigenvalues of the equilibrium point $(\phi=\phi_0,x=0,y=0)$ are given by $0,-2$ and $1$,  so there exists an unstable manifold at the equilibrium with strictly positive dimension.

Then, it must be still analyzed case \eqref{itm:2} where $y=z=0$, which means $\phi=\phi_0$ constant and $x_0=-\sqrt{3/V(\phi_0)}$. This fact implies $V'(\phi_0)=0$ and it is better to use the constraint \eqref{fri1jm} to study the behavior of solutions of
\begin{align*}
&\dot\phi=v,\\
&\dot v=\sqrt{3\left(\rho+\frac12 v^2+V(\phi)\right)}\,v-V'(\phi)+\alpha\rho,\\
&\dot\rho=-\rho\left(\alpha v+\gamma\sqrt{3\left(\rho+\frac12 v^2+V(\phi)\right)}\right),
\end{align*}
in the unknown functions $(\phi,v,\rho)$ of $t$, near the equilibrium point $(\phi_0,0,0)$ (note that the system is regular at this point, since $V(\phi_0)$ is strictly positive). The eigenvalues are given by $\gamma\sqrt{3V(\phi_0)}$ and $\tfrac12\left(\sqrt{3V(\phi_0)}\pm\sqrt{3V(\phi_0)-4V''(\phi_0)}\right)$, which means that there exists at least one eigenvalues with positive real part and, again, there exists an unstable manifold at the equilibrium with strictly positive dimension. This completes the proof.
\end{proof}

Theorem \ref{thm:unbound} suggests that one needs to find a suitable
coordinate transformation, which maps a \emph{neighborhood} of $\phi=\infty$
onto a bounded set. Assumption \ref{as:phi} precisely performs this task, therefore from now on we will assume that the potential $V(\phi)$ belongs to the class $\mathfrak C_0$ (see Definition \ref{def:V0}).


Using the above hypothesis, we perform the change $s=f(\phi)$ and observe
that $\tfrac{\mathrm{d}s}{\mathrm{d}\tau}=-\tfrac{1}{h}f^{\prime}(\phi
)\dot{\phi}=-yf^{\prime}(\phi(s))$, where $\phi(s)$ is the inverse function of
$s(\phi)$ introduced in Assumption \ref{as:phi}. Observing that the above
relation, together with \eqref{eq:dz}, does not contain $x$, we can eliminate
$x$ from \eqref{eq:dy} using the constraint \eqref{eq:constr} and then we
end up at the following unconstrained system:
\begin{subequations}
\begin{align}
&  \frac{\mathrm{d}y}{\mathrm{d}\tau}=\left(  3-\frac{1}{2}y^{2}\right)
(y+\sqrt{6}u(\phi(s)))-z^{2}\left(  \frac{\gamma}{2}y+\alpha+\sqrt{6}%
u(\phi(s))\right)  ,\label{eq:dy2}\\
&  \frac{\mathrm{d}z}{\mathrm{d}\tau}-\frac{1}{2}z\left[  y^{2}-\alpha
y+\gamma(z^{2}-3)\right]  ,\label{eq:dz2}\\
&  \frac{\mathrm{d}s}{\mathrm{d}\tau}=-yf^{\prime}(\phi(s)).\label{eq:ds2}%
\end{align}
Because of assumption \ref{as:phi}, we can extend
the functions $u(\phi(s))$ and $f(\phi(s))$ up to $s=0$. Then, due to
constraint \eqref{eq:constr} and to the fact that $V(\phi)$ is positive as
$\phi\rightarrow\infty$, we will consider as admissible set for the system
\eqref{eq:dy2}--\eqref{eq:ds2}, the set
\end{subequations}
\[
\Omega_{+}=\{(y,z,s)\,:\,\tfrac{1}{2}y^{2}+z^{2}\leq3,\,s\in
\lbrack0,s_0]\}.
\]

The assumptions made on $V(\phi)$ allow us to
consider the same problem \eqref{eq:dy2}--\eqref{eq:dz2} on the set
$\Omega_{-}=\{(y,z,s)\,:\,\tfrac12 y^{2}+ z^{2}\le3,\,s\in[-s_0,0]\}$.
Let us observe that the set $\{s=0\}$ may be seen as a subset of both $\Omega_{+}$
and $\Omega_{-}$, but its meaning is  different between the two
situations, since it represents the flow at positive and negative infinity,
respectively. To distinguish between the two cases, we will refer to them as
the sets $\Sigma_{+}$ and $\Sigma_{-}$, respectively.

In this way we have reduced the study of our dynamical system in a compact
set. To study the global properties of the flow, however, we need to take into
account the behavior of the solution when the scalar field $\phi$ lies in the set
$[N,M]$. To this aim we first observe that, choosing initial data of
\eqref{eq:dphi}--\eqref{eq:dz} such that $x(0)=x_{0}<0$, then $x(\tau)^2<x_{0}^2, \forall\tau>0$. Therefore we need to perform a $\mathcal{C}^{2}$ junction of
the sets $\Omega_{+}$ and $\Omega_{-}$ described above, with a suitable compact subset
$\Omega_{0}$ of $\{(y,z,\phi)\,:\,\tfrac12 y^{2}+z^{2}\le 3+C x_{0}^{2}%
,\,\phi\in[N,M]\}$, where the constant $-C$ (that can be chosen to be nonnegative) bounds $V(\phi)$ from below, in
such a way that the flow for $t>0$ completely lies in the set $\Omega
=\Omega_{-}\cup\Omega_{0}\cup\Omega_{+}$, see Figure \ref{fig:Omega}.
\begin{figure}
\begin{center}
\psfull \epsfig{file=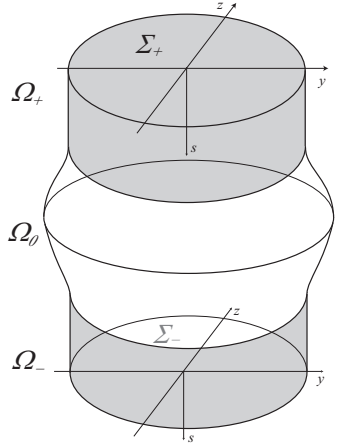, height=6cm} \caption{The set $\Omega$ is defined as the union of $\Omega_\pm$ with a suitable subset of
$\{(y,z,\phi)\,:\,\tfrac12 y^2+z^2\le 3+ C x_0^2,\,\phi\in[N,M]\}$. Note that coordinate chart on $\Omega_\pm$ and $\Omega_0$ are given by $(y,z,s)$ and $(y,z,\phi)$, respectively.}\label{fig:Omega}
\end{center}
\end{figure}

\begin{definition}\label{def:problequiv}
We say that a curve $\gamma:\mathbb{I}\rightarrow
\Omega$ is a solution of the problem $\mathcal{P}$ if it solves \eqref{eq:dy2}--\eqref{eq:ds2} when $\gamma$ lives in
$\Omega_{\pm}$ and it solves the system given by equation \eqref{eq:dphi},
\eqref{eq:dy} and \eqref{eq:dz} when $\gamma$ is
in $\Omega_{0}$, where $x$ is given by the constraint \eqref{eq:constr}.
\end{definition}

\begin{remark}
\label{rem:equiv} It is a simple consequence of the above construction, that
every solution of the initial problem \eqref{fri1jm}--\eqref{conssfjm} (with
$k=0$) such that $H(0)<x_{0}^{-1}$, corresponds uniquely to a curve $\gamma$
in $\Omega$ solution of $\mathcal{P}$. Of course, the converse may be false,
because the set of solutions of $\mathcal{P}$ may also include curves lying in
the set
\[
\uby=\Sigma_{+}\cup\{(y,z,s)\in\Omega_{+}\,:\,\tfrac12 y^{2}+z^{2}%
=3\}\cup\{(y,z,s)\in\Omega_{-}\,:\,\tfrac12 y^{2}+z^{2}=3\}\cup\Sigma_{-}.
\]
\end{remark}

The following lemma states a useful property of $\omega$--limit points of the flow.

\begin{lemma}
\label{thm:lemomega} Let $\gamma\in\Omega$ be a generic solution of problem
$\mathcal{P}$ and let $p\in\omega(\gamma)$ an $\omega$--limit point for
$\gamma$. If $p\in\Omega_{+}\cup\Omega_{-}$, then $p\in\uby$.
\end{lemma}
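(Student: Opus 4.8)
The plan is to reproduce, in the compactified picture, the monotonicity argument already used for Theorem~\ref{thm:unbound}. First I would record that a generic solution $\gamma$ of $\mathcal P$ remains in the compact set $\Omega$ and that the fields \eqref{eq:dy2}--\eqref{eq:ds2} are regular up to $s=0$ (the face $\Sigma_\pm$ is invariant, since $f'(\phi(s))\to0$ there), so $\gamma$ is defined for all $\tau\ge0$ and $\omega(\gamma)$ is the usual nonempty, compact, connected, flow-invariant set \cite[Section 3.2]{perko}. The quantity $x$ is a continuous function on all of $\Omega$: in $\Omega_0$ it is a coordinate, while in $\Omega_\pm$ it is read off the constraint \eqref{eq:constr} as the negative branch $x=-\sqrt{(3-\tfrac12y^2-z^2)/V(\phi(s))}$ (the sign being fixed because $\mathrm{sign}(x)$ is invariant), the $\mathcal C^2$ junction ensuring continuity across the interfaces. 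By \eqref{eq:dx} one has $\tfrac{\mathrm dx}{\mathrm d\tau}=-\tfrac12 x(y^2+\gamma z^2)\ge0$ along $\gamma$, so $x(\tau)$ increases monotonically toward a limit $x_\infty\in[x_0,0]$.

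The monotone limit then forces $x\equiv x_\infty$ on $\omega(\gamma)$: any $q\in\omega(\gamma)$ is a limit of $\gamma(\tau_n)$ with $\tau_n\to+\infty$, and $x$ is continuous. Fixing $p\in\omega(\gamma)\cap(\Omega_+\cup\Omega_-)$ and letting $\gamma_p$ be the orbit through $p$, invariance of $\omega(\gamma)$ gives $\gamma_p\subseteq\omega(\gamma)$, hence $x\equiv x_\infty$ and $\tfrac{\mathrm dx}{\mathrm d\tau}\equiv0$ all along $\gamma_p$. From \eqref{eq:dx} this leaves two possibilities on $\gamma_p$: either $x_\infty=0$, or $y^2+\gamma z^2\equiv0$.

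In the first case, if the $s$-coordinate of $p$ is positive then $V(\phi(s_p))>0$ — by (A2), since there $\phi\ge M>b$ — and substituting $x_\infty=0$ into \eqref{eq:constr} yields $\tfrac12y^2+z^2=3$ at $p$, which is exactly the outer boundary component of $\uby$; if instead the $s$-coordinate vanishes then $p\in\Sigma_\pm\subseteq\uby$. So $x_\infty=0$ always gives $p\in\uby$. In the second case $x_\infty<0$ and $y^2+\gamma z^2\equiv0$ on $\gamma_p$. When the barotropic index is $\gamma>0$ this forces $y\equiv z\equiv0$, hence $\phi$ (and $s$) constant and $\gamma_p$ a single equilibrium; but \eqref{eq:dy2} evaluated at $y=z=0$ equals $3\sqrt6\,u(\phi(s_p))$, which must vanish, forcing $V'(\phi(s_p))=0$ — impossible, because $\phi(s_p)\ge M>b$ and (A2) gives $V'>0$ there. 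This contradiction removes the second case whenever $\gamma>0$, and the lemma follows.

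The single delicate point, which I expect to be the main obstacle, is the borderline index $\gamma=0$: there $y^2+\gamma z^2\equiv0$ only yields $y\equiv0$, and \eqref{eq:dy2}--\eqref{eq:dz2} then admit a nonisolated one-parameter family of interior equilibria $(0,z_p,s_p)$ with $z_p^2=9u(\phi(s_p))/(2+3u(\phi(s_p)))$. These must be excluded exactly as in Theorem~\ref{thm:unbound}: linearizing shows they carry an unstable manifold of strictly positive dimension (transverse to the zero eigendirection tangent to the family), so the set of initial data whose trajectory limits onto them has measure zero and is discarded by the word ``generic'' in the statement. With this degeneracy removed, $x_\infty<0$ is impossible and every $\omega$-limit point lying in $\Omega_\pm$ must belong to $\uby$.
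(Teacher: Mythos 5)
Your proof is correct, but it takes a genuinely different route from the paper's. The paper disposes of the lemma in two lines by citing Theorem \ref{thm:unbound}: for a generic solution there is an $\omega$--limit point $q\in\Sigma_{\pm}$, from which it infers $\lim x=0$ and then reads $\tfrac12 y^2+z^2=3$ off the constraint \eqref{eq:constr} at any limit point with $s\neq0$. You instead re-run the monotonicity argument entirely inside the compactified setting: $x$ is reconstructed from \eqref{eq:constr} on $\Omega_{\pm}$, shown nondecreasing via \eqref{eq:dx}, constant on $\omega(\gamma)$, and the dichotomy $x_\infty=0$ versus $y^2+\gamma z^2\equiv0$ on $\gamma_p$ is settled pointwise. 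This buys two things. First, self-containedness and a cleaner exclusion of interior equilibria: for $s_p\neq0$ one has $\phi(s_p)\ge M>b$ (resp.\ $\le N<a$), so (A2) rules out $u(\phi(s_p))=0$ outright, with no need for the unstable-manifold genericity analysis of Theorem \ref{thm:unbound} except in your borderline case. Second, and more substantively, your handling of $\gamma=0$ patches a point that the dichotomy ``either $x=0$ or $y=z=0$'' (used verbatim in the paper's proof of Theorem \ref{thm:unbound}) silently skips: for $\gamma=0$, constancy of $x<0$ only forces $y\equiv0$, and your family $(0,z_p,s_p)$ with $z_p^2=9u/(2+3u)$ genuinely exists; your instability claim is verifiable, since the two nonzero eigenvalues of the linearization sum to $3$, so at least one has positive real part. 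Conversely, note that the paper's step ``$q\in\Sigma$ implies $x\to0$'' is immediate only when $V\to\infty$ along $\phi\to\pm\infty$; for potentials in the class with finite limit, such as \eqref{rsquared}, your case analysis supplies exactly the missing justification. One small slip on your side: in the second case you assert the contradiction unconditionally, but if $s_p=0$ then $u(0^{\pm})$ may vanish (again \eqref{rsquared}) and no contradiction arises --- harmlessly, since then $p\in\Sigma_{\pm}\subseteq\uby$ and the conclusion holds anyway; that subcase should be flagged explicitly, as you did in the first case.
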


\begin{proof}
It must be shown that $\omega(\gamma)\cap(\Omega_+\cup\Omega_-)\subseteq\uby$. First, we observe that Theorem \ref{thm:unbound} states the existence of a point $q\in\omega(\gamma)$ such that $q\in\Sigma$. That means that $\lim_{t\to\sup\I}x(t)=0$, because the constraint \eqref{eq:constr} is satisfied on $\gamma$ and then on $\omega(\gamma)$. Therefore, if $p\in\omega(\gamma)\cap(\Omega_+\cup\Omega_-)$ is such that $s\ne 0$, then $V$ is finite and again from \eqref{eq:constr} we get $\tfrac12 y^2+z^2=3$, which implies $p\in\uby$.
\end{proof}

The above lemma states, in rough words, that a \emph{generic} solution of problem $\mathcal P$ approaches the set $\uby$. By the word ``generic'' we mean, in this context, a solution corresponding to a generic -- up to zero--measured set -- choice of initial data.
There can be solutions such that the scalar $\phi$ approaches, for instance, a local extremum of the potential $V(\phi)$, but these solutions correspond to  special choices of the initial data and are unstable with respect to this choices.

\subsection{The flow near $\uby$}\label{sec:equilibria}

In order to study late time behavior of collapsing solutions, we need information on the structure of $\omega(\gamma)$. Most of this information can be obtained studying the flow near $\uby$. In this section we will list equilibria
of the system \eqref{eq:dy2}--\eqref{eq:ds2} and their properties. It will turn out that the phase portrait significantly changes, especially at $\Sigma_\pm$, depending on the values of $\gamma$ and $u(0)$, the limit value of $u(\phi(s))$ as $s\to 0$. The different situations depending on these parameters is summarized in  Figure \ref{fig:zone}: each region corresponds to a particular phase portrait, sketched in Figure \ref{fig:pp}.

\begin{figure}
\begin{center}
\psfull \epsfig{file=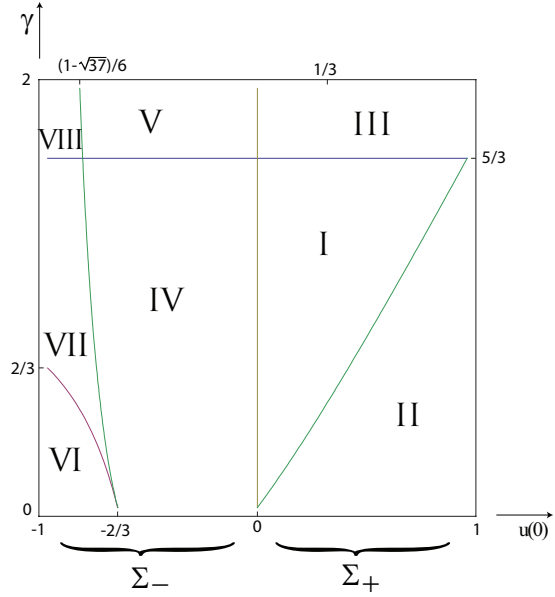, height=8cm} \caption{Summary of the different phase portrait of the flow near $\Sigma_\pm$, according to the values of $\gamma$ and $u(0^\pm)$.  The roman numbers correspond to situations depicted in Figure \ref{fig:pp}.}\label{fig:zone}
\end{center}
\end{figure}

In the following, we will denote points of $\Sigma_{\pm}$ by $(y,z,0^{\pm})$ .

\begin{figure}[htp]
  \begin{center}
    \subfigure[The only stable point is $\mathcal A_+$.]{\label{fig:1}\includegraphics[width=7cm]{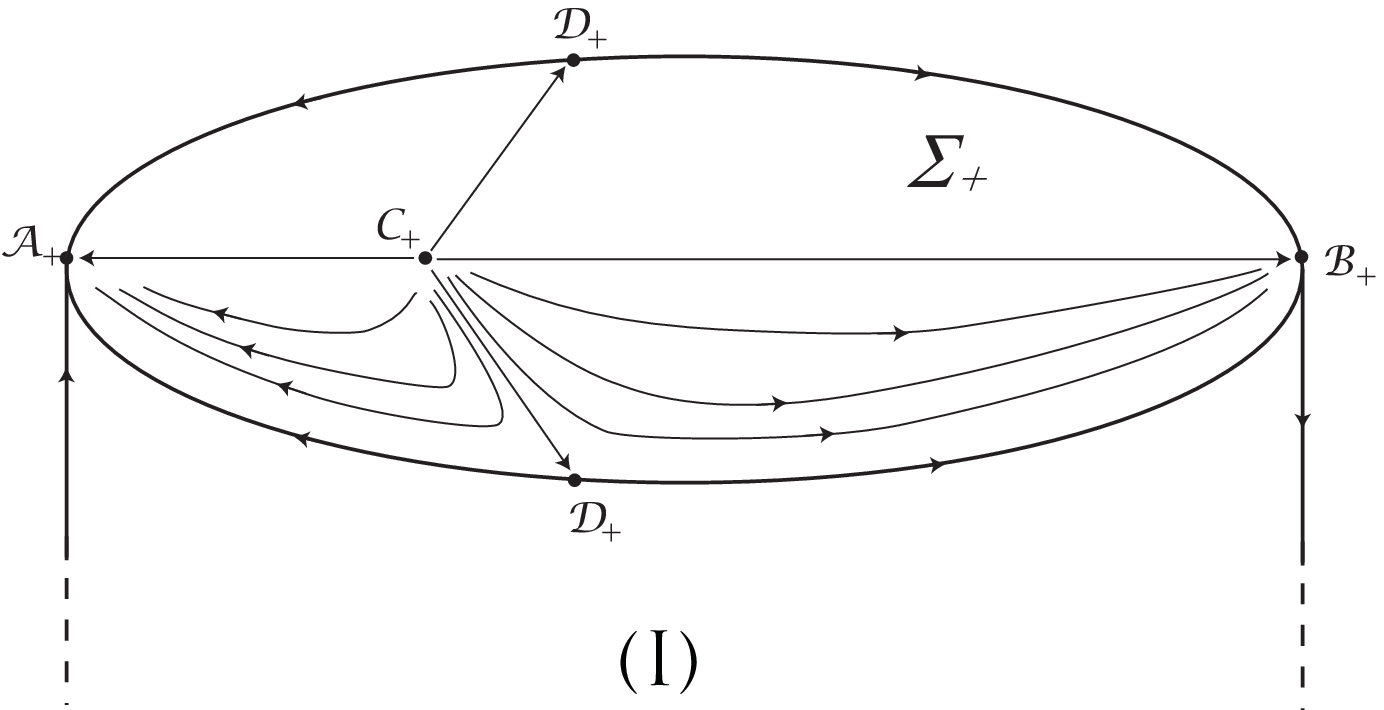}}
    \subfigure[The (unstable) equilibrium $\mathcal E_+$ is connected with $\mathcal C_+$ and $\mathcal D_+$ with two separatrices spiraling through each other.]{\label{fig:2}\includegraphics[width=7cm]{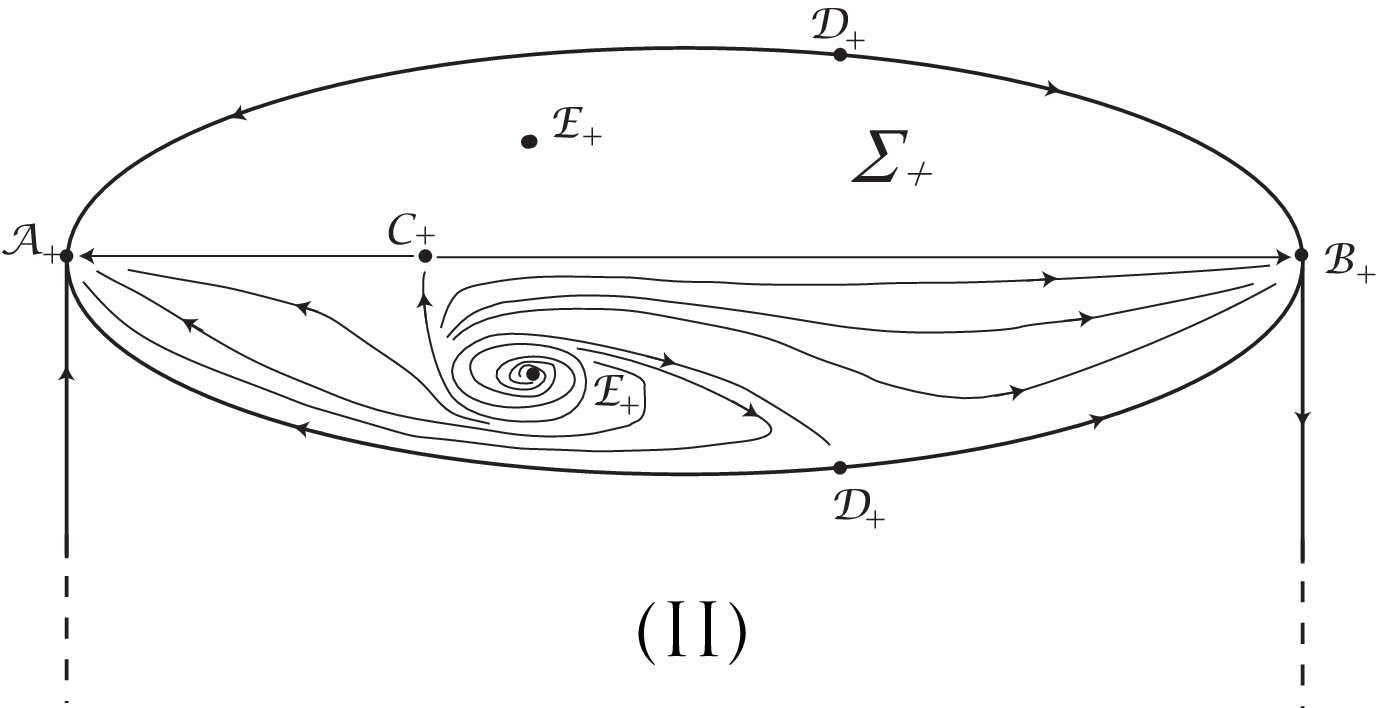}} \\
    \subfigure[No stable point exists: the flow is generically forced to move away from $\Sigma_+$.]{\label{fig:3}\includegraphics[width=7cm]{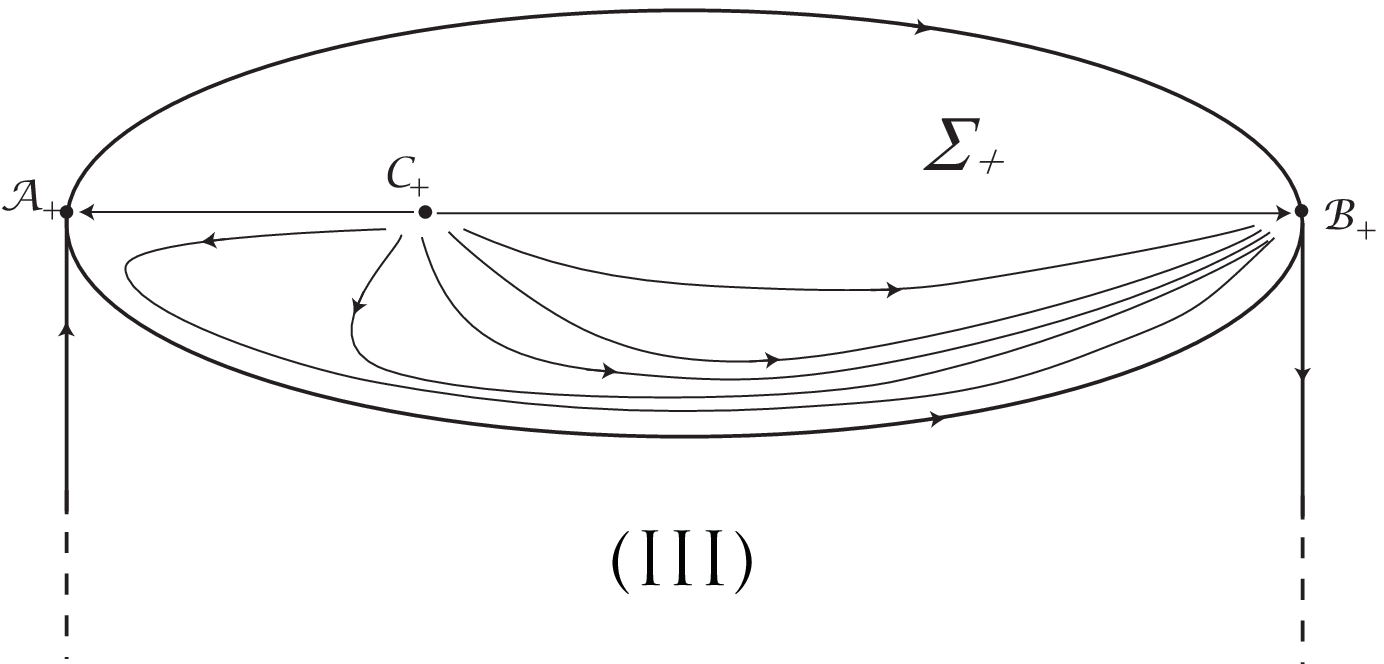}}
    \subfigure[The only stable equilibrium point is $\mathcal B_-$, but there are also generical solutions driving through $\mathcal A_-$ and moving away from $\Sigma_-$.]{\label{fig:4}\includegraphics[width=7cm]{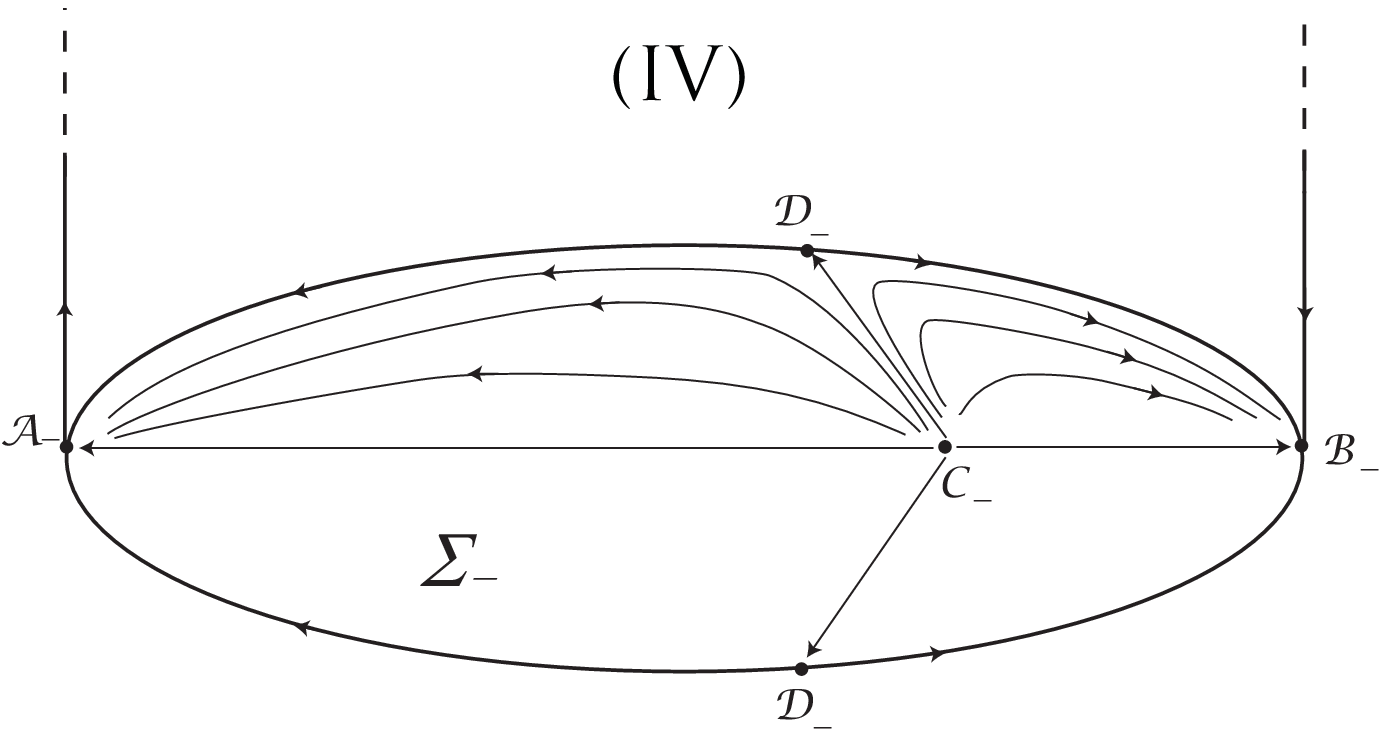}}\\
    \subfigure[Solutions are generically forced to approach $\mathcal B_-$.]{\label{fig:5}\includegraphics[width=7cm]{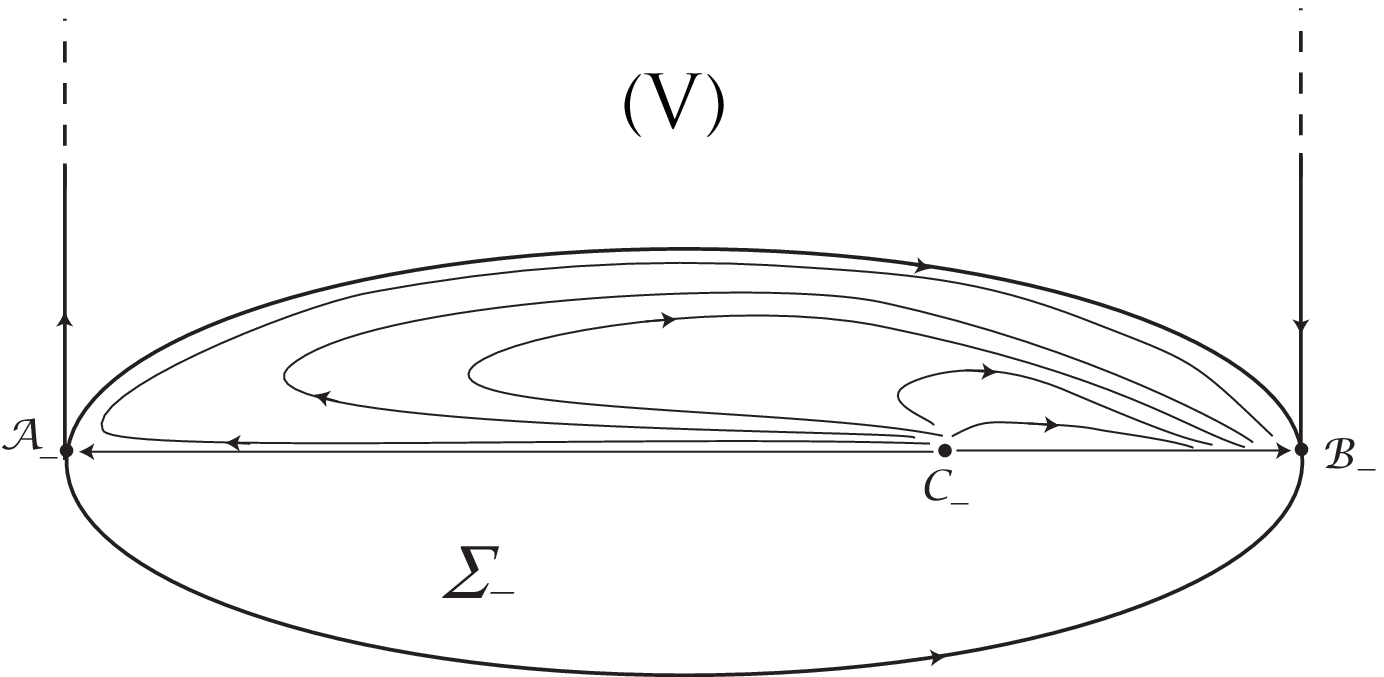}}
    \subfigure[Similar to Figure \ref{fig:4}, only direction of the separatrix between $\mathcal C_-$ and $\mathcal D_-$ changes.]{\label{fig:6}\includegraphics[width=7cm]{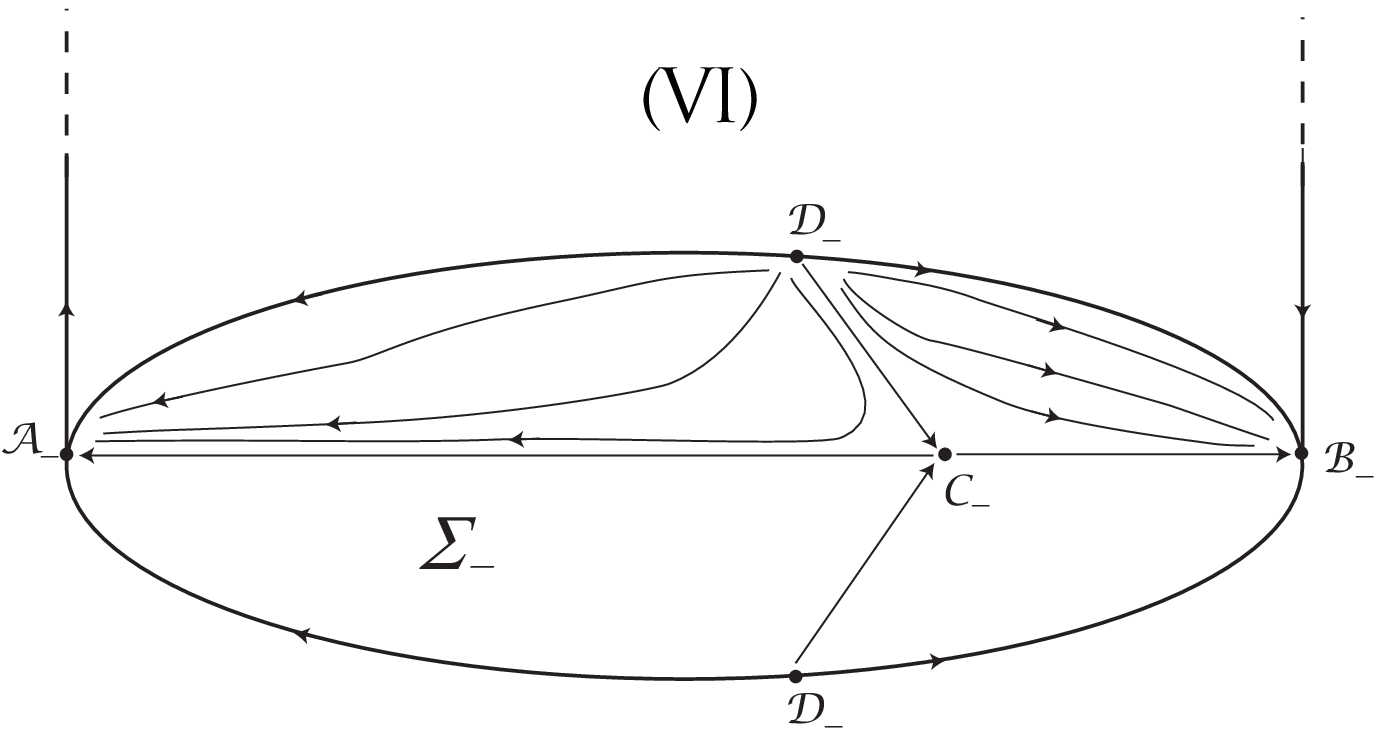}}\\
    \subfigure[The only stable equilibrium is $\mathcal B_-$ and $\mathcal E_-$ exhibits a similar behavior as $\mathcal E_+$ of Figure \ref{fig:2}.]{\label{fig:7}\includegraphics[width=7cm]{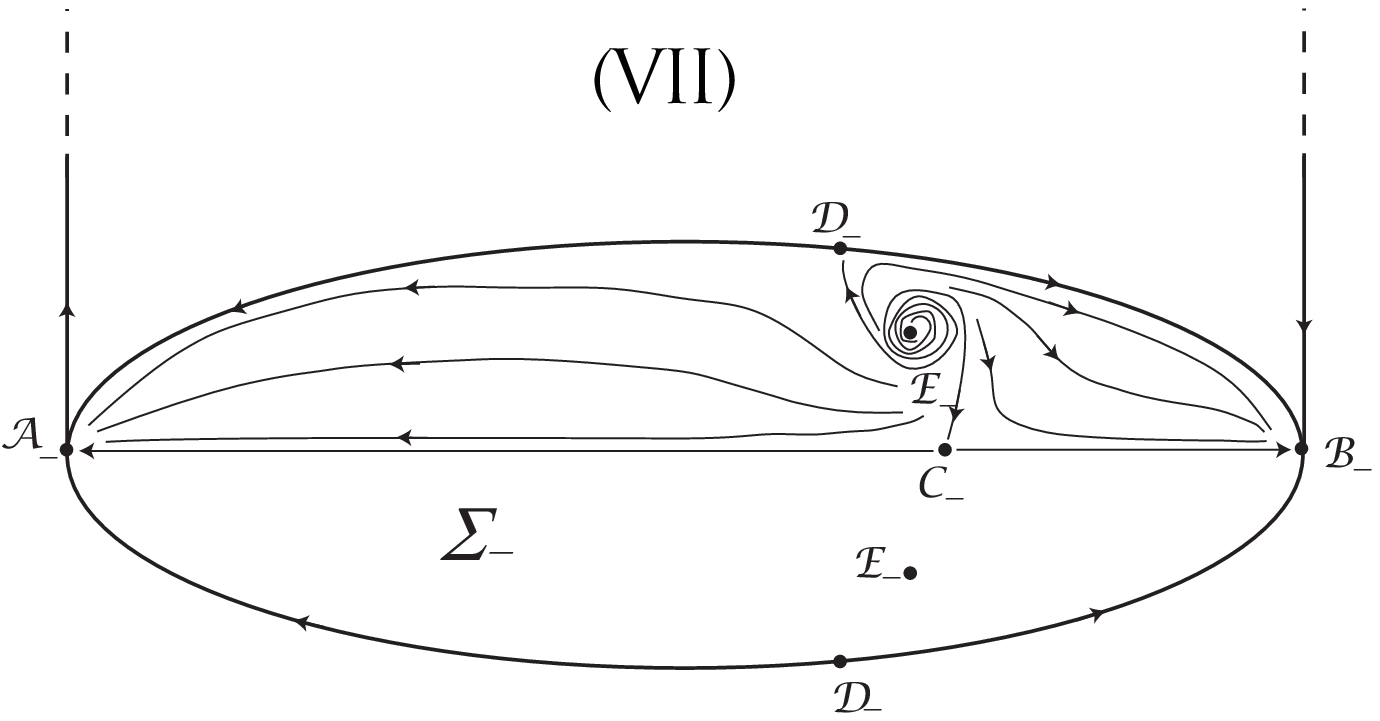}}
    \subfigure[Curves in $\Sigma_-$ are generically forced to move towards $\mathcal B_-$.]{\label{fig:8}\includegraphics[width=7cm]{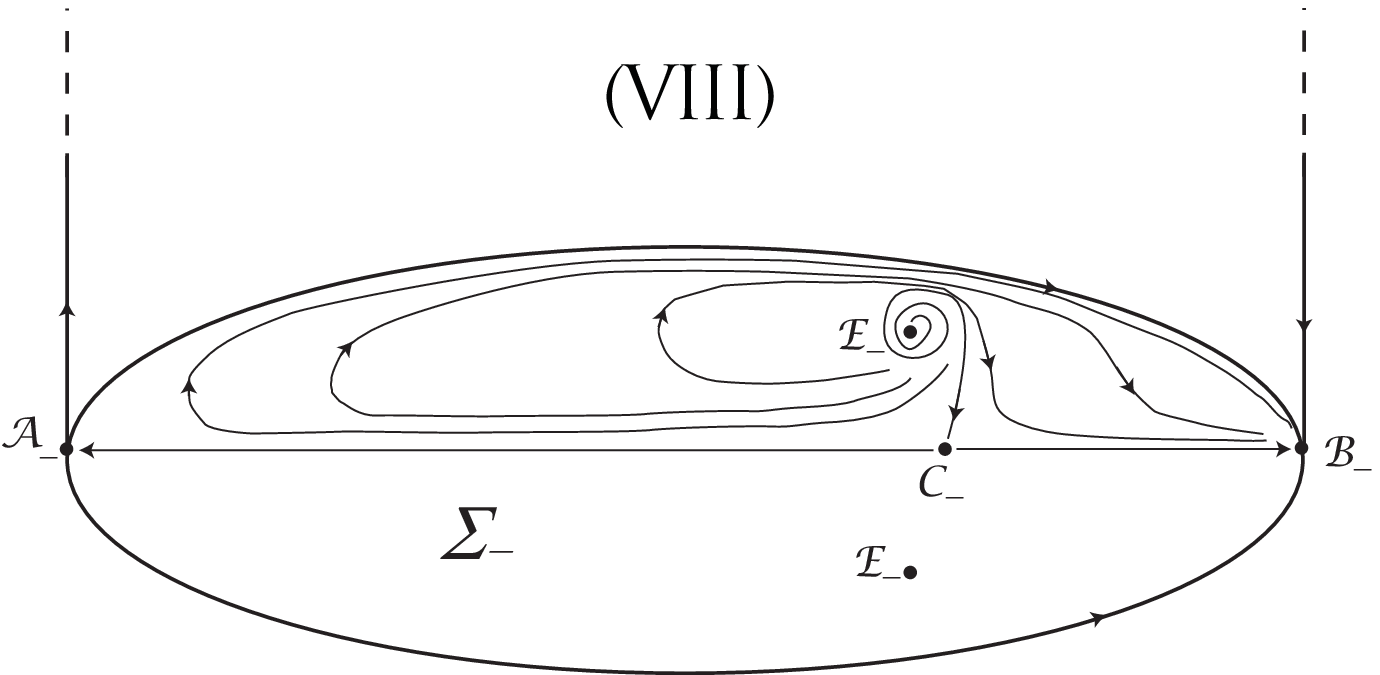}}
  \end{center}
  \caption{Phase portrait sketches for the different situations summarized in Figure \ref{fig:zone}.}
  \label{fig:pp}
\end{figure}

\begin{enumerate}
\item $\mathcal{A}_{\pm}(-\sqrt6, 0, 0^{\pm})$. The eigenvalues are
$\lambda_{1}=-5+3\gamma,\,\lambda_{2}= 6(u(0^{\pm})-1),\,\lambda_{3}=0$, with
corresponding eigenvectors $\mathbf{v_{1}}(0,1,0),\,\mathbf{v_{2}%
}(1,0,0),\,\mathbf{v_{3}}(0,0,1)$. To find the behavior of the flow in the direction
$\mathbf{v_{3}}$, note that \eqref{eq:ds2} implies $\frac{\mathrm{d}\phi
}{\mathrm{d}\tau}=\sqrt6$ on $\mathbf{v_{3}}$, from which we get $\phi
\approx\sqrt6\tau+\phi_{0}$. This means that $\mathcal{A}_{+}$ is attractive
along $\mathbf{v_{3}}$ and $\mathcal{A}_{-}$ is repulsive along the same
direction. Then we have the following three subcases, according to the value
of $\gamma$:

\begin{enumerate}
\item $\gamma>5/3$. The point $\mathcal{A}_{\pm}$ is unstable and
the only stable direction tangent to $\Sigma_{\pm}$ is given by $\mathbf{v_{2}}$.

\item $\gamma=5/3$. The central manifold is 2-dimensional and it easy to see
that $\tfrac{\mathrm{d}z}{\mathrm{d}\tau}=\sqrt6 z^{2} (1-u(s))>0$,  so the
point is repulsive along the direction $\mathbf{v_{1}}$.

\item $\gamma<5/3$. The point is attractive along directions in the plane
$\Sigma_{\pm}$ and so $\mathcal{A}_{+}$ is a stable equilibrium point
(whereas $\mathcal{A}_{-}$ is unstable because repulsive along $\mathbf{v_{3}%
}$).
\end{enumerate}

\item $\mathcal{B}_{\pm}(\sqrt6, 0, 0^{\pm})$. The eigenvalues are
$\lambda_{1}=-1,\,\lambda_{2}=-6(1+u(0^{\pm})),\,\lambda_{3}=0$, with
corresponding eigenvectors $\mathbf{v_{1}}(0,1,0),\,\mathbf{v_{2}%
}(1,0,0),\,\mathbf{v_{3}}(0,0,1)$. Arguing as before, $\phi\approx-\sqrt
6\tau+\phi_{0}$ along $\mathbf{v_{3}}$ and so $\mathcal{B}_{-}$ is a stable
equilibrium point, whereas $\mathcal{B}_{+}$ is unstable along the direction
$\mathbf{v_{3}}$.

\item $\mathcal{C}_{\pm}(-\sqrt6 u(0^{\pm}),0,0^{\pm})$. The eigenvalues are
given by $\lambda_{1}=3(1-u(0^{\pm})^{2}),\,\lambda_{2}=\tfrac12(3\gamma
-4u(0^{\pm})+3\gamma u(0^{\pm})-6u(0^{\pm})^{2}),\,\lambda_{3}=0$, with
corresponding eigenvectors $\mathbf{v_{1}}(0,1,0),\,\mathbf{v_{2}%
}(1,0,0),\,\mathbf{v_{3}}(0,0,1)$. The equilibrium point is unstable since
$\lambda_{1}>0$, but it is useful to observe stability along directions on
$\Sigma_{\pm}$ and so the sign of $\lambda_{2}$ must be investigated. Let us
distinguish the two following cases:

\begin{enumerate}
\item $\mathcal{C}_{+}$. In this case $\forall u=u(0^{+})\in]0,1[$, there
exists $\gamma_{\mathcal{C}}(u)\in]0,5/3]$ such that $\lambda_{2}\gtreqless0$
if $\gamma\gtreqless\gamma_{\mathcal{C}}(u)$. If $\gamma=0$, then
$\lambda_{2}>0,\,\forall\gamma\in]0,2]$.

\item $\mathcal{C}_{-}$. In this case $\forall u=u(0^{-})\in]-1,-{2/3}[$,
there exists $\gamma_{\mathcal{C}}(u)\in]0,2]$ such that $\lambda
_{2}\gtreqless0$ if $\gamma\gtreqless\gamma_{\mathcal{C}}(u)$. If $u(0^{-}%
)\in[-{2/3},0]$, then $\lambda_{2}>0$.
\end{enumerate}

In both cases, the expression of $\gamma_{\mathcal{C}}(u(0))$ is given by
$\gamma_{\mathcal{C}}(u)=\tfrac{2u(2+3u)}{3(1+u)}$.

\item $\mathcal{D}_{\pm}(\tfrac{2\alpha}{2-\gamma},+z_{\gamma},0_{\pm})$ and
$(\tfrac{2\alpha}{2-\gamma},-z_{\gamma},0_{\pm})$ are four equilibria, where
$z_{\gamma}$ is chosen such that the points lie on the set $\tfrac12
y^{2}+z^{2}=3$. These equilibria exist only if $\gamma\le5/3$. When
$\gamma=5/3$ we recover the case $\mathcal{A}_{\pm}$ discussed above. If
$\gamma<5/3$, the eigenvalues are given by $\lambda_{1}=0,\,\lambda_{2}%
=\tfrac{2(5-3\gamma)}{3(2-\gamma)}>0$ and $\lambda_{3}=\tfrac{2(-8+3\gamma-12
u(0^{\pm})+9\gamma u(0^{\pm}))}{3(2-\gamma)}$. The corresponding eigenvectors
are $\mathbf{v_{1}}(0,0,1),\,\mathbf{v_{2}}$ is the tangent vector to the
boundary of the set $\Sigma_{\pm}$ and $\mathbf{v_{3}}$ is transversal to the
same boundary. If $u=u(0)\in]-1,-2/3[$, there exists a $\gamma_{\mathcal{D}%
}(u)\in]0,2/3]$ such that $\lambda_{3}\gtreqqless0$ when $\gamma
\lesseqqgtr\gamma_{\mathcal{D}}(u)$. Otherwise, $\lambda_{3}<0$. The
expression for $\gamma_{\mathcal{D}}(u)$ is $\gamma_{\mathcal{D}}%
(u)=\tfrac{4(2+3u)}{3(1+3u)}$.

\item $\mathcal{E}_{\pm}(-\tfrac{3\gamma}{\alpha+\sqrt6 u(0^{\pm})},z_{0},0)$
and $(-\tfrac{3\gamma}{\alpha+\sqrt6 u(0^{\pm})},-z_{0},0)$ are other four
equilibria, lying on the interior of $\Sigma_{\pm}$. Let us distinguish the
cases $0^{+}$ and $0^{-}$:

\begin{enumerate}
\item On $\Sigma_{+}$, the points are admissible if $\gamma\le5/3$ (the limit
case $\gamma=5/3$ the two points coincides with $\mathcal{A}_{+}$) in
particular, $\forall u=u(0^{+})\in]0,1[$, the equilibria on $\Sigma_{+}$ are
admissible when $\gamma\le\gamma_{\mathcal{C}}(u)$ (in the limit case
$\gamma_{\mathcal{C}}(u)$ the two points coincide exactly with $\mathcal{C}%
_{+}$). Note that $y\le0$.

\item On $\Sigma_{-}$, $\forall u=u(0^{-})\in]-1,-{2/3}[$, the points are
admissible if $\gamma_{\mathcal{D}}(u)\le\gamma\le\gamma_{\mathcal{C}}(u)$,
the limit cases coinciding with $\mathcal{D}_{-}$ and $\mathcal{C}_{-}$.
Notice that $y\ge0$.
\end{enumerate}

\item \label{itm:g43} Finally, only in the case $\gamma=4/3$, there exists a
set of equilibria given by the points $(0,\pm\sqrt3, s)$, $\forall s$. Note
that, at level $s=0$, that gives exactly the points $\mathcal{D}_{\pm}$. The
eigenvalues are given by $\lambda_{1}=-4,\,\lambda_{2}=1,\,\lambda_{3}=0$,
with eigenvectors  $\mathbf{v_{1}}$, corresponding to a transversal direction
to the boundary of the level set $s=\mathrm{const.}$, $\mathbf{v_{2}}$ the
tangent direction to the same boundary, and $\mathbf{v_{3}}(0,0,1)$.
\end{enumerate}

\subsection{Singularity formation}\label{sec:main}

The analysis of the flow near the set $\uby$ allows to complete the study of late time behavior of the solution.
\begin{theorem}\label{thm:main}
Let $V\in\mathfrak C_0$ and  $x_0<0$. For almost every choice of initial data for the system \eqref{fri1jm}--\eqref{conssfjm} with $k=0$ and $H(0)\in[x_0^{-1},0[$, $\exists t_s>0$ such that the solution admits a right maximal extension to the set $[0,t_s[$. The scalar field $\phi$ diverges in modulus as $t\to t_s^{-}$, together with its velocity $\dot\phi$.
\end{theorem}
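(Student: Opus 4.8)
The plan is to push the entire question through the compactification and then read the asymptotics of the physical variables off the $\omega$--limit set of the flow $\mathcal P$. First I would use Remark \ref{rem:equiv}: since $H(0)\in[x_0^{-1},0[$, the solution of \eqref{fri1jm}--\eqref{conssfjm} corresponds to a unique curve $\gamma$ of $\mathcal P$ contained in the compact set $\Omega$. As $\Omega$ is compact and the field defining $\mathcal P$ is at least $\mathcal C^1$, the curve $\gamma(\tau)$ is defined for all $\tau\ge0$, and its $\omega$--limit set $\omega(\gamma)$ is nonempty, compact, connected and flow--invariant. Equation \eqref{eq:dx} shows that $x$ is monotone nondecreasing (because $x<0$ and $y^2+\gamma z^2\ge0$), so $x\to x_\infty\le0$ and $x\equiv x_\infty$ on $\omega(\gamma)$. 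For a generic datum Theorem \ref{thm:unbound} yields a point of $\omega(\gamma)$ on $\Sigma:=\Sigma_+\cup\Sigma_-$; there $V$ is finite, so the constraint \eqref{eq:constr} forces $x_\infty=0$, and Lemma \ref{thm:lemomega} then gives $\omega(\gamma)\subseteq\uby$.

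The heart of the argument is to pin down $\omega(\gamma)$ using the equilibrium list of Subsection \ref{sec:equilibria}. Since $\omega(\gamma)$ is connected, invariant and contained in $\uby$, I would show that for almost every datum it collapses to a single attracting equilibrium on $\Sigma_+\cup\Sigma_-$. Each of $\mathcal C_\pm$, $\mathcal D_\pm$, $\mathcal E_\pm$, the $\gamma=4/3$ line, and the unstable branches of $\mathcal A_\pm,\mathcal B_\pm$ carries an eigenvalue with positive real part, hence a nontrivial unstable manifold; the data whose orbit converges to any such point fill only the corresponding stable manifold, of positive codimension and therefore of zero measure, and may be discarded along with the exceptional set of Theorem \ref{thm:unbound}. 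Running through the finitely many regimes of Figure \ref{fig:zone} (portraits in Figure \ref{fig:pp}), the only attracting equilibria turn out to be $\mathcal A_+=(-\sqrt6,0,0^+)$ (when $\gamma<5/3$) and $\mathcal B_-=(\sqrt6,0,0^-)$ (for every $\gamma$), and in each regime a generic orbit is funneled into one of these two points. I expect this case--by--case bookkeeping --- checking that no generic orbit is trapped at an interior equilibrium with $y=0$ nor driven back to finite $\phi$ --- to be the main obstacle, since it must absorb the full dependence of the portrait on $\gamma$ and on the limit $u(0^\pm)$.

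Granted convergence to $\mathcal A_+$ or $\mathcal B_-$, the conclusion follows by inverting the substitutions \eqref{eq:change} and $s=f(\phi)$. The value $s=0$ is exactly $\phi\to+\infty$ at $\mathcal A_+$ and $\phi\to-\infty$ at $\mathcal B_-$, so $|\phi|\to\infty$; this is consistent with the growth $\phi\approx-y\,\tau+\phi_0$ found in the analysis of $\mathcal A_\pm,\mathcal B_\pm$, with $y\to\pm\sqrt6$. For the velocity, $\dot\phi=yH=y/x$ with $y\to\pm\sqrt6\ne0$ and $x\to0^-$, whence $|\dot\phi|=|y/x|\to+\infty$.

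It remains to confirm that all of this takes place in finite comoving time. From $\de\tau=-H\,\de t$ we have $\de t=-x\,\de\tau$, so $t_s=\int_0^{+\infty}(-x(\tau))\,\de\tau$. Near either attractor $y^2+\gamma z^2\to6$, so \eqref{eq:dx} reads $\tfrac{\de x}{\de\tau}\approx-3x$ and $x$ decays like $e^{-3\tau}$; the integral converges and $t_s<\infty$. Since $H=1/x\to-\infty$ as $\tau\to+\infty$, the curvature blows up and $[0,t_s[$ is indeed the maximal right interval of the original solution. Thus $|\phi|$ and $|\dot\phi|$ both diverge as $t\to t_s^-$, which is the content of Theorem \ref{thm:main}.
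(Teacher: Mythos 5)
Your proposal is correct and follows essentially the same route as the paper: compactification into problem $\mathcal P$, use of Theorem \ref{thm:unbound} and Lemma \ref{thm:lemomega} to force $\omega(\gamma)\subseteq\uby$ generically, the case-by-case identification of $\mathcal A_+$ (for $\gamma<5/3$) and $\mathcal B_-$ as the only generic attractors, and then divergence of $\phi$, $\dot\phi$ in finite comoving time. The only cosmetic difference is the final step, where you integrate $\de t=-x\,\de\tau$ using the exponential decay $x\sim e^{-3\tau}$, while the paper runs the equivalent comparison argument $\dot H\le-2H^2$ from \eqref{eq:hdiv}; both hinge on the same fact that $y^2+\gamma z^2\to6$ at the attractors.
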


\begin{proof}
First, we cast the system into the compact framework of problem $\mathcal P$ (see Definition \ref{def:problequiv}).
Using the analysis performed in last Section, we know that there are three possible cases in $\Sigma_{+}$ and five possible cases in
$\Sigma_{-}$, for the flow of the system at $x=0$. By inspection
of all possible cases, it is a lengthy but straightforward task to conclude that there cannot exist cycles
contained in the set $\uby$ and this means that each solution of problem
$\mathcal{P}$ can only approach the stable equilibria of the system, which are  the points $\mathcal{A}_{+}$ and
$\mathcal{B}_{-}$ lying on the ``unphysical'' boundary $\uby$. Let us observe also that $\mathcal A_+$ is admissible only when $\gamma<5/3$.

Translated into the original problem formalism with unknown functions depending on comoving time $t$, the aforesaid means that there are two possible generical endstates for collapsing solutions:

\begin{enumerate}
\item \label{itm:plus} either $\phi\to+\infty$, $\dot\phi\to+\infty$,
$\tfrac{\dot\phi}{H}\to-\sqrt6$ and $\tfrac{\sqrt\rho}{H}\to0$,

\item \label{itm:minus} or $\phi\to-\infty$, $\dot\phi\to-\infty$,
$\tfrac{\dot\phi}{H}\to\sqrt6$ and $\tfrac{\sqrt\rho}{H}\to0$.
\end{enumerate}

In both cases $H$ diverges to $-\infty$ and it is easy to show that this
happens as the comoving time approaches a finite value $t_{s}>0$, resulting in
singularity formation. Indeed, from \eqref{fri2jm} (recall $k=0$) and
\eqref{eq:change}, we have
\begin{equation}\label{eq:hdiv}
\dot H=-\frac{H^{2}}2\left(  y^{2}+\gamma z^{2}\right)  .
\end{equation}
Since the quantity in round brackets approaches the positive value 6, there
exists some $t_{0}>0$ such that $\dot H(t)<-2 H(t)^{2}$ and therefore, by standard comparison argument in ODE theory, the
function $H(t)$ is not extendable beyond some finite value $t_{s}$.
\end{proof}

\begin{remark}\label{eq:comparison}
As seen before, the singularity forms because the scalar field energy
$\frac12\dot\phi(t)^{2}+V(\phi(t))$ diverges. It is interesting to investigate
the late time behavior of the fluid energy $\rho(t)$. This can be easily
inferred by inspection of \eqref{conssfjm}, recalling $k=0$ and
\eqref{eq:change}:
\[
\frac{\dot\rho}{\rho}=-\dot\phi\left(  \frac{3\gamma}y+\alpha\right)  .
\]
In case \eqref{itm:minus}, the quantity into round brackets approaches the
value $4/\sqrt6$ and therefore $\rho(t)\approx e^{-4\phi/\sqrt6}$ as $\phi
\to-\infty$,  so the energy density diverges. Instead, in case
\eqref{itm:plus} (which can happen only if $\gamma<5/3$) $\rho(t)\approx
e^{\phi(6\gamma-4)/\sqrt6}$, which implies that the energy density of the
fluid diverges only if $\gamma>2/3$ and is bounded otherwise. However, since
$\tfrac{\rho}{H^{2}}$ tends to 0, the energy density of the fluid produces a
lower order curvature divergence with respect to the energy of the scalar
field, and the latter dominates in the late time of the collapse.
\end{remark}

\section{Gravitational collapse models}\label{sec:star}

The model studied attracts a great deal of attention in cosmology, since it can be seen as the study of \emph{backwards--in--time} singularity
for a FRW spacetime in an expanding phase --  in other words, reversing time direction we can study the initial singularity of expanding cosmologies. Nevertheless, we can also consider these singular solutions as matter interior, performing a suitable match with an exterior spacetime, thereby obtaining models of collapsing objects, whose endstate can be investigated.

The natural choice for the exterior is
the so--called \emph{generalized Vaidya solution}, which is
essentially the spacetime generated by a radiating object (see \cite{ww} for a detailed description),
\begin{equation}\label{eq:Va}
\text ds_{\mathrm{ext}}^2=-\left(1-\frac{2M(U,Y)}Y\right)\,\mathrm
dU^2-2\,\mathrm
dY\,\mathrm dU + Y^2\,\mathrm d\Omega^2,
\end{equation}
where $M$ is an arbitrary (positive) function.
The matching is performed along a hypersurface $\Sigma=\{r=r_b\}$. The Israel junction conditions at the
matching hypersurface with a FRW interior have been shown in \cite{giambo1,collapse} and amounts to require that Misner--Sharp mass $M$ is continuous and
$\tfrac{\partial M}{\partial U}=0$ on the junction hypersurface. Note that the equation of motion for the scalar field
remains smooth on the matching hypersurface.

The endstate of the collapse of these homogeneous perfect fluid
stars is analyzed in the following theorem (compare with \cite[Theorem 4.1]{collapse}).

\begin{teo}\label{thm:endstate}
A homogeneous perfect fluid star generically collapses to a black hole.
\end{teo}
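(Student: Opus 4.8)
The plan is to reduce the claim to the generic collapse endstate of Theorem \ref{thm:main} and then prove that the ensuing curvature singularity is swallowed by a trapped region. First I would extract quantitative rates from Theorem \ref{thm:main}: for almost every datum the collapse terminates at a finite comoving time $t_s$ with $H\to-\infty$ and $\tfrac12\dot\phi^2+V(\phi)\to+\infty$, while $y=\dot\phi/H\to\mp\sqrt6$ and $z=\sqrt\rho/H\to0$. Since then $y^2+\gamma z^2\to6$, equation \eqref{eq:hdiv} reads $\dot H\sim-3H^2$ near $t_s$, so that $H\sim\tfrac{1}{3(t-t_s)}$ and, integrating $H=\dot a/a$, the scale factor behaves like $a(t)\sim(t_s-t)^{1/3}\to0$: the collapse is a genuine crunch. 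The geometric input is the Misner--Sharp mass of the flat FRW interior: with areal radius $R=r\,a(t)$ one has $1-\tfrac{2M}{R}=g^{\mu\nu}\partial_\mu R\,\partial_\nu R=1-R^2H^2$, whence $2M=R^3H^2$ and the apparent horizon is the locus $R_{\mathrm{ah}}(t)=|H(t)|^{-1}$, which shrinks to $0$ as $t\to t_s^-$.

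Next I would compare the apparent horizon with the boundary and with the singular slice. On $\Sigma=\{r=r_b\}$ the junction conditions recalled above ($M$ continuous and $\partial M/\partial U=0$) reduce the exterior \eqref{eq:Va} to a Schwarzschild field of constant mass $M_b=M(r_b,\cdot)$, and the stellar surface becomes trapped exactly when $R_b(t)=r_b a(t)$ reaches $2M_b=|H|^{-1}$, i.e. when $r_b|\dot a|=1$. Because $|\dot a|=a|H|\sim\tfrac13(t_s-t)^{-2/3}\to+\infty$ only in the limit $t\to t_s^-$, the quantity $r_b|\dot a|$ is finite for $t<t_s$ and necessarily crosses the value $1$ at some $t_{\mathrm{ah}}<t_s$; thus the surface crosses its Schwarzschild radius strictly before the singularity forms.

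The decisive step is to verify that the central singularity is covered, namely that no outgoing radial null geodesic escapes from it. Along such a geodesic $a\,\dot r=1$, so the areal radius obeys the linear equation $\dot R=HR+1$; inserting the near-singularity profile $H\sim\tfrac{1}{3(t-t_s)}$ and solving gives $R(t)\approx C\,(t_s-t)^{1/3}-\tfrac32(t_s-t)$, with $C>0$ forced by $R\ge0$. Since $R_{\mathrm{ah}}(t)=|H|^{-1}\sim3(t_s-t)$, one reads off $R(t)/R_{\mathrm{ah}}(t)\to+\infty$ as $t\to t_s^-$, so every outgoing null ray issued near the centre lies in the trapped region $R>R_{\mathrm{ah}}$, where $\dot R=HR+1<0$, and is dragged into $R=0$. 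Consequently the singular slice $\{t=t_s\}$, whose normal $\mathrm dt$ is timelike, is spacelike and entirely enclosed by the apparent horizon: a black hole forms. The adverb \emph{generically} is inherited verbatim from Theorem \ref{thm:main}, the excluded zero--measure data (along which $\phi$ remains bounded, approaching a critical point of $V$ as in the bounded branch of Theorem \ref{thm:unbound}) being exactly those for which the horizon need not close and a naked singularity may persist.

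I expect the last step to be the main obstacle. Asserting that a homogeneous crunch ``is spacelike and hence covered'' is too crude; one must control the outgoing null cone against the shrinking apparent horizon uniformly as $t\to t_s^-$, and this hinges on upgrading the qualitative limits of Theorem \ref{thm:main} to the quantitative power laws above through \eqref{eq:hdiv} and \eqref{conssfjm}. By contrast, the Misner--Sharp computation and the boundary-trapping estimate are routine, needing little more than the monotonicity of $a$ and the divergence of $|\dot a|$.
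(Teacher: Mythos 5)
Your main line of argument is correct and shares its skeleton with the paper's proof: both rest on the criterion, taken from \cite{giambo1} and stated at the start of the paper's proof, that the apparent horizon of the flat FRW interior is $r^2\dot a(t)^2=1$, so that the singularity is covered precisely when $\dot a$ is unbounded; and both extract the unboundedness of $\dot a$ from the generic limit $y^2+\gamma z^2\to 6$ supplied by Theorem \ref{thm:main}. The difference is one of packaging. The paper integrates \eqref{eq:ddota}, whose square bracket tends to $-2$, to get $|\dot a|\approx a^{-2}\to\infty$ as $a\to 0$, and then simply invokes \cite{giambo1} for the covered/naked dichotomy. You instead upgrade the qualitative limits to the power laws $H\sim\tfrac{1}{3(t-t_s)}$, $a\sim(t_s-t)^{1/3}$ via \eqref{eq:hdiv} (legitimate: $\dot H=-(3+o(1))H^2$ integrates to $1/H=(3+o(1))(t-t_s)$), and you re-prove the dichotomy by hand, checking that outgoing radial null rays, which obey $\dot R=HR+1$, remain in the trapped region $R>|H|^{-1}$ and are dragged to $R=0$. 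This buys a self-contained and more quantitative argument -- essentially re-deriving the content of the cited criterion -- at the price of asymptotic bookkeeping that the paper avoids.

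Two of your side claims are wrong, although neither is load-bearing. First, the junction conditions do \emph{not} reduce the exterior to a Schwarzschild field of constant mass: $\partial M/\partial U=0$ is imposed only \emph{on} the matching hypersurface, the exterior remains a genuinely radiating generalized Vaidya spacetime (the paper stresses that only dust matches to Schwarzschild, \eqref{eq:Va} being its generalization), and the boundary Misner--Sharp mass $M_b(t)=\tfrac12 r_b^3\, a\,\dot a^2$ is time dependent, indeed divergent as $t\to t_s^-$, so there is no constant $M_b$. Your argument survives only because the condition you actually use, $r_b|\dot a|=1$, is the correct interior apparent-horizon crossing. Second, your identification of the exceptional data is inaccurate: the non-generic naked cases discussed in the paper's closing Remark are not the bounded-$\phi$ solutions approaching critical points of $V$ (those do not reach the singular regime at all), but the solutions asymptotic to the unstable equilibrium $\mathcal C_+$, along which $\phi\to+\infty$ with $y^2+\gamma z^2\to 6\,u(0^+)^2$, so that for $0<u(0^+)<\sqrt3/3$ the singularity still forms in finite comoving time by \eqref{eq:hdiv} while $\dot a$ stays bounded by \eqref{eq:ddota}, leaving the singularity naked.
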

\begin{proof}

The equation of the apparent horizon for
the flat FRW metric  is given by $r^2\dot a(t)^2=1$. If $\dot a$
is bounded, the junction surface $r=r_b$ can be chosen sufficiently
small such that  $\left(1-\tfrac{2M}R\right)$
is bounded away from zero near the singularity. As a consequence,
 the exterior portion of the spacetime admits
null radial geodesics that can be extended in the past up to the singularity which is therefore naked; otherwise, if $\dot a^2$ is
unbounded, the trapped region forms and the collapse is covered into a
black hole \cite{giambo1}.

Now, using \eqref{fri1jm}--\eqref{fri2jm} together with \eqref{eq:change}, one has
\begin{equation}\label{eq:ddota}
\frac{\ddot a(t)}{\dot a(t)}=\frac{\dot a(t)}{a(t)}\left[1-\frac12(y^2+\gamma z^2)\right],
\end{equation}
and since, as a consequence from Theorem \ref{thm:main}, the quantity into square brackets converges generically to the value $-2$,
we can integrate \eqref{eq:ddota} with respect to $t$ and using the fact that $a(t)\to 0$ we conclude that $\dot a(t)$ generically diverges to $-\infty$ and hence the singularity is hidden inside a black hole for almost each choice of the initial data.
\end{proof}

\begin{remark}
For the models under exam, the formation of naked singularities is forbidden for almost every choice of the initial data. Nevertheless, there may be special situations where the apparent horizon does not form and the singularity is hence naked -- simply, the theorem excludes that this  happens generically.

For instance, with reference to the discussion of the equilibria given in Subsection \ref{sec:equilibria}, we know that there exists non generical solutions tending to the equilibrium $\mathcal C_+$ (Figure \ref{fig:2}), such that $y^2+\gamma z^2\to 6 u(0^+)$. Using \eqref{eq:hdiv} and \eqref{eq:ddota} respectively we have that, if $0<u(0^+)<\sqrt 3/3$, the singularity forms in a finite amount of comoving time and  is naked because $\dot a$ remains bounded. Since $z\to 0$ goes to zero, this case basically corresponds to the special case in the situation where only the scalar field appears, and not the perfect fluid -- indeed, compare with \cite[eqns.(5.3)--(5.5)]{collapse}).
There may be also special solutions future asymptotic to $\mathcal D_+$, but in this case, anyway, the quantity $y^2+\gamma z^2$ is seen to converge to the value $\tfrac{16-6\gamma}{6-3\gamma}$. Then \eqref{eq:hdiv} ensures singularity formation in a finite time, but the collapse ends into a black hole, as can be seen using again \eqref{eq:ddota}.

In summary, we can conclude that the behavior here exhibited is -- for the same class of potentials -- quite similar to the scalar field without matter studied in \cite{collapse}. As well known, Oppenheimer--Snider model of homogeneous dust represents a classical case of covered singularity and this feature can be removed introducing higher order terms in the mass functions that destroy homogeneity. We recall that dust spacetimes can be regularly matched with Schwarzschild solution and, in this context, \eqref{eq:Va} can be seen as a generalization of Schwarzschild. Therefore, it can be said  the models studied in the present paper can be considered a scalar field generalization of the classical Oppenheimer--Snider solution, and the addition of a barotropic perfect fluid to the matter content has not affected the general picture, that basically remains the same of the classical massless case.
\end{remark}

\thanks{\textbf{Acknowledgment.} The author wishes to thank John Miritzis for stimulating discussions and valuable suggestions.}

\end{document}